\newcommand{\dd}{\textup{d}}
\def\eps{\varepsilon}
\def\E{\mathbb{E}}
\def\P{\mathbb{P}}
\def\R{\mathbb{R}}
\def\black{\color{black}}
\def\b{\mathbf{b}}
\def\c{\mathbf{c}}
\def\u{\mathbf{u}}
\def\v{\mathbf{v}}
\def\w{\mathbf{u}}
\def\f{\mathbf{f}}
\def\q{\mathbf{q}}
\def\p{\mathbf{p}}
\def\n{\mathbf{n}}
\def\r{\mathbf{r}}
\def\D{\prescript{}{0}D_{t}^{1-\alpha}}
\def\DD{\mathcal{D}}
\def\LL{\mathbb{L}}
\def\A{\mathcal{A}}
\def\G{\mathcal{G}}
\def\domain{V}
\def\kon{k_{\textup{on}}}
\def\koff{k_{\textup{off}}}
\def\dist{\textup{d}}
\newtheorem{theorem}{Theorem}
\newtheorem{lemma}[theorem]{Lemma}
\theoremstyle{plain}
\theoremstyle{remark}
\begin{document}


\title[]{\textcolor{black}{Subdiffusion-limited} fractional reaction-subdiffusion equations \textcolor{black}{with affine reactions}: solution, stochastic paths, and applications}



\author{Sean D. Lawley}
\email[]{lawley@math.utah.edu}
\affiliation{University of Utah, Department of Mathematics, Salt Lake City, UT 84112 USA}


\date{\today}

\begin{abstract}
In contrast to normal diffusion, there is no canonical model for reactions between chemical species which move by anomalous subdiffusion. Indeed, the type of mesoscopic equation describing reaction-subdiffusion depends on subtle assumptions about the microscopic behavior of individual molecules. Furthermore, the correspondence between mesoscopic and microscopic models is not well understood. In this paper, we study the subdiffusion-limited model, which is defined by mesoscopic equations with fractional derivatives applied to both the movement and the reaction terms. Assuming that the reaction terms are affine functions, we show that the solution to the fractional system is the expectation of a random time change of the solution to the corresponding integer order system. This result yields a simple and explicit algebraic relationship between the fractional and integer order solutions in Laplace space. We then find the microscopic Langevin description of individual molecules that corresponds to such mesoscopic equations and give a computer simulation method to generate their stochastic trajectories. This analysis identifies some precise microscopic conditions that dictate when this type of mesoscopic model is or is not appropriate. We apply our results to several scenarios in cell biology which, despite the ubiquity of subdiffusion in cellular environments, have been modeled almost exclusively by normal diffusion. Specifically, we consider subdiffusive models of morphogen gradient formation, fluctuating mobility, and fluorescence recovery after photobleaching (FRAP) experiments. We also apply our results to fractional ordinary differential equations.
\end{abstract}

\pacs{}

\maketitle

\section{\label{intro}Introduction}

Subdiffusion has been observed in very diverse systems \cite{oliveira2019, klafter2005, sokolov2012, meroz2015} and is especially prevalent in cell biology \cite{hofling2013, barkai2012}. Subdiffusion is defined by the following sublinear growth in the mean-squared displacement of a tracer particle,
\begin{align}\label{sub}
\E\big[\big(Y(t)-Y(0)\big)^{2}\big]
\propto t^{\alpha},\quad \alpha\in(0,1),
\end{align}
where $Y(t)$ is the one-dimensional position of the particle at time $t\ge0$ and $\E$ denotes expectation. 

A number of mathematical models yield the nonlinear phenomenon in \eqref{sub}, including continuous-time random walks, fractional Brownian motion, and random walks on fractal and disordered systems \cite{hofling2013}. The continuous-time random walk model can be used to derive the following fractional diffusion equation \cite{metzler2000},
\begin{align}\label{fde}
\frac{\partial}{\partial t}c(x,t)
=\D K\frac{\partial^{2}}{\partial x^{2}} c(x,t),\quad x\in\R,\,t>0,
\end{align}
for the concentration $c(x,t)$ of some chemical at position $x$ at time $t$. In the mesoscopic description \eqref{fde}, the parameter $K>0$ is the generalized diffusivity (with dimensions $(\text{length})^{2}(\text{time})^{-\alpha}$) and $\D$ is the Riemann-Liouville fractional derivative \cite{samko1993},
\begin{align}\label{rl}
\D \phi(t)
:=\frac{\dd}{\dd t}\int_{0}^{t}\frac{1}{\Gamma(\alpha)(t-t')^{1-\alpha}}\phi(t')\,\dd t',
\end{align}
where $\Gamma(\alpha)$  is the Gamma function.

An important and now longstanding question is how to model reaction kinetics for subdiffusive molecules (see the review \cite{nepomnyashchy2016} and \cite{hornung2005, gafiychuk2008, boon2012, angstmann2013, kosztolowicz2013, hansen2015, straka2015, dossantos2019, zhang2019, li2019}). In contrast to normal diffusion, there is no canonical model for modeling reactions between subdiffusive molecules. Indeed, significantly different forms of reaction-subdiffusion equations have been proposed {\black (see \cite{nepomnyashchy2016} and also the Discussion section below)}, and the structure of these mesoscopic equations depends on subtle assumptions about the microscopic behavior of individual molecules.

The following form of reaction-subdiffusion equations has been proposed for so-called subdiffusion-limited systems \cite{seki2003b, yuste2004b, nepomnyashchy2016},
\begin{align}\label{formm1}
\frac{\partial}{\partial t}\c
=\D\Big(\text{diag}(K_{1},\dots,K_{n})\frac{\partial^{2}}{\partial x^{2}}\c+{\f}(\c)\Big),
\end{align}
where $\c$ is the vector of $n$ chemical concentrations,
\begin{align*}
\c(x,t)=({c_{i}}(x,t))_{i=1}^{n}\in\R^{n},
\end{align*}
with $n$ generalized diffusivities,  $K_{1},\dots,K_{n}$, and
\begin{align*}
{\f}:\R^{n}\mapsto\R^{n}
\end{align*}
describes reactions between the $n$ species. Importantly, the fractional operator $\D$ is applied to both the movement and the reaction terms in the righthand side of \eqref{formm1}. Models of the form \eqref{formm1} have been derived from continuous-time random walks \cite{seki2003b}, particularly those with instantaneous creation and annihilation \cite{henry2006}. Such models have also been proposed to describe the numerical simulations of \cite{yuste2004b}. Similar models have been used to study subdiffusive bimolecular reactions \cite{yuste2004b, kosztolowicz2006, kosztolowicz2008, kosztolowicz2013}, subdiffusive pattern formation \cite{nec2013}, and traveling waves in subdiffusive media \cite{nec2010, nepomnyashchy2013}. We note that \eqref{formm1} is sometimes written with $\frac{\partial}{\partial t}$ replaced by the Caputo derivative and $\D$ replaced by the identity \cite{nepomnyashchy2016}.

Many fundamental questions regarding {\black equations of the form \eqref{formm1}} remain unanswered. What is the solution? How can we investigate stability? What {\black do such equations} imply about the stochastic movement and reactions of single molecules? How can one simulate the stochastic trajectories of {\black such} individual molecules? What are some biophysical implications for a system following {\black such an equation}?

In this paper, we answer these questions {\black in the case that the reaction term $\f(\c)$ is an affine function of the chemical concentrations $\c$}. In particular, we consider fractional equations of the general form
\begin{align}\label{form0}
\frac{\partial}{\partial t}\c
=\DD(\A\c+\r),\quad x\in\domain\subseteq\R^{d},\,t>0.
\end{align}
In \eqref{form0}, $\domain\subseteq\R^{d}$ is a $d$-dimensional spatial domain (if $\domain$ has a boundary, then we also impose boundary conditions) and $\DD$ is the following integro-differential operator,
\begin{align}\label{DD}
\DD \phi(t)
=\frac{\dd}{\dd t}\int_{0}^{t}M(t-t')\phi(t')\,\dd t',
\end{align}
where $M(t)$ is some given memory kernel (notice that \eqref{DD} reduces to \eqref{rl} if $M(t)=\frac{1}{\Gamma(\alpha)t^{1-\alpha}}$). Further, $\textcolor{black}{\r=\r(x)\in\R^{n}}$ is a space-dependent, time-independent vector, and $\A$ is a linear, spatial operator.

The main example that we have in mind is where $\r\equiv0$ and $\A$ is the diffusion-advection-reaction operator,
\begin{align}\label{ex}
\A\c
=(\text{diag}(\LL_{1},\dots,\LL_{n})+R(x))\c
=\begin{pmatrix}
\LL_{1}c_{1}\\
\vdots\\
\LL_{n}c_{n}
\end{pmatrix}+R(x)\c,
\end{align}
where $\textcolor{black}{R(x):\overline{\domain}\mapsto\R^{n\times n}}$ is a space-dependent matrix and $\LL_{1},\dots,\LL_{n}$ are $n$ forward Fokker-Planck operators, each of the form
\begin{align}\label{fpo}
\begin{split}
&\LL_{i}f(x)
:=-\sum_{j=1}^{d}\frac{\partial}{\partial x_{j}}\big[\mu_{j}(x,i)f(x)\big]\\
&\quad+\frac{1}{2}\sum_{j=1}^{d}\sum_{k=1}^{d}\frac{\partial^{2}}{\partial x_{j}\partial x_{k}}
\Big[\big({\sigma}(x,i){\sigma}(x,i)^{\top}\big)_{j,k}f(x)\Big], 
\end{split}
\end{align}
where $\mu(x,i)\in\R^{d}$ is the external force (drift) vector and $
\sigma(x,i)\in\R^{d\times m}$ describes the space-dependence and anisotropy in the diffusivity for each chemical species $i\in\{1,\dots,n\}$. In this case, $R(x)$ describes the reactions between the $n$ chemical species and $\LL_{i}$ describes the movement of the $i$th species. In the absence of reactions, such equations as \eqref{form0}-\eqref{fpo} are called fractional Fokker-Planck equations \cite{metzler1999}. Notice that \eqref{form0}-\eqref{fpo} becomes \eqref{formm1} if $d=1$, $\domain=\R$, $\mu(x,i)=0$, $\sigma(x,i)=\sqrt{2K_{i}}$, and $\f(\c)=R(x)\c$.

The rest of the paper is organized as follows. In section~\ref{solution}, we show that the solution to \eqref{form0} is
\begin{align}\label{soln}
\c(x,t)
=\E[{\w}(x,S(t))],
\end{align}
where ${\w}(x,s)$ satisfies the corresponding integer order equation (namely \eqref{form0} with $\DD$ replaced by the identity) and $S(t)$ is the inverse of a L\'{e}vy subordinator with Laplace exponent $\Psi(\lambda)$ given by the reciprocal of the Laplace transform of the memory kernel in the integro-differential operator $\DD$ in \eqref{DD},
\begin{align*}
\Psi(\lambda)
=\frac{1}{\widehat{M}(\lambda)},
\end{align*}
where the Laplace transform in time is denoted by
\begin{align*}
\widehat{\phi}(\lambda)
:=\int_{0}^{\infty}e^{-\lambda t}\phi(t)\,\dd t.
\end{align*}
We obtain \eqref{soln} by proving the following algebraic relationship between $\c$ and $\w$ in Laplace space,
\begin{align}\label{solnlp}
\widehat{\c}(x,\lambda)
=\frac{\Psi(\lambda)}{\lambda}\widehat{\w}(x,\Psi(\lambda)).
\end{align}
We also show how \eqref{soln} yields a sufficient condition for {\black linear} stability when the reactions in \eqref{form0} are nonlinear. In section~\ref{stochrep}, we give the stochastic Langevin representation of individual molecules described by \eqref{form0} with $\A$ in \eqref{ex}-\eqref{fpo}. Specifically, we construct a stochastic process whose probability density satisfies \eqref{form0}-\eqref{fpo} when $R(x)$ has a certain probabilistic structure. In this section, we also give a stochastic simulation algorithm to generate realizations of the stochastic process underlying \eqref{form0}. In section~\ref{examples}, we apply our results to some examples of biophysical interest. In particular, we analyze subdiffusive models of protein gradient formation, stochastically switching mobility, and fluorescence recovery after photobleaching (FRAP) experiments. In section~\ref{sectionode}, we apply our results to fractional ordinary differential equations (ODEs). We conclude by discussing related work and future directions.

\section{\label{solution}Exact solution}

In this section, we show that \eqref{soln} satisfies the fractional equations in \eqref{form0} if $\u(x,s)$ satisfies the corresponding integer order equations. The main rigorous result is Theorem~\ref{laplacespace} in section~\ref{setup}, which makes no reference to \eqref{form0}. Instead, Theorem~\ref{laplacespace} is a general result about the Laplace transform of any function subordinated by a continuous, inverse L\'{e}vy subordinator (as in \eqref{soln}), assuming the function satisfies a mild integrability assumption (see \eqref{tonelli}). In section~\ref{imp}, we then show formally how Theorem~\ref{laplacespace} implies that \eqref{soln} satisfies \eqref{form0}. In sections~\ref{bc}-\ref{stability}, we work out some implications of this result.


\subsection{\label{setup}Main theorem}

Let the stochastic process \textcolor{black}{$T=\{T(s)\}_{s\ge0}$} be a L\'{e}vy subordinator. That is, $T$ is a one-dimensional, nondecreasing L\'{e}vy process with $T(0)=0$ \cite{bertoin1996, sato1999}. For each fixed $s>0$, assume that $T(s)$ is a continuous random variable, which means
\begin{align}\label{ctsrv}
\P(T(s)=t)=0,\quad\text{for all }s>0\text{ and }t\ge0.
\end{align}
Let $\Psi(\lambda)$ denote the Laplace exponent of $T$, which means that for all $s\ge0$ and $\lambda\ge0$,
\begin{align}
\E[e^{-\lambda T(s)}]
&=e^{-s\Psi(\lambda)}, \label{le}\\
\Psi(\lambda)
&=b \lambda + \int_{0}^{\infty}(1-e^{-\lambda z})\,\nu(\dd z),\nonumber
\end{align}
where $b\ge0$ {\black is the drift and $\nu$ is the L\'{e}vy measure}. 
Let $S=\{S(t)\}_{t\ge0}$ be the inverse subordinator of $T$,
\begin{align}\label{S}
S(t)
:=\inf\{s>0:T(s)>t\}.
\end{align}
Notice that $S(0)=T(0)=0$ almost surely. Notice also that paths of $S$ {\black are continuous} functions of $t$, since \eqref{ctsrv} implies that paths of $T$ {\black are strictly increasing} functions of $s$. 

\begin{theorem}\label{laplacespace}
Let
\begin{align*}
\w(s)
=(u_{i}(s))_{i=1}^{n}:[0,\infty)\mapsto\R^{n},
\end{align*}
be a given function of time. 
Fix $\lambda>0$ and assume that for each component $i\in\{1,\dots,n\}$,
\begin{align}\label{tonelli}
\int_{0}^{\infty}e^{-\lambda t}\E\big|u_{i}(S(t))\big|\,\dd t<\infty.
\end{align}
If we define {\black $\c(t)
:=\E[\w(S(t))]$ for $t\ge0$}, then
\begin{align*}
\lambda\widehat{\c}(\lambda)
=\Psi(\lambda)\widehat{\w}(\Psi(\lambda)).
\end{align*}
\end{theorem}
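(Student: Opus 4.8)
The plan is to compute $\widehat{\c}(\lambda)=\int_0^\infty e^{-\lambda t}\E[\w(S(t))]\,\dd t$ by exchanging the expectation and the time integral, and then to change variables from $t$ to the subordinator level $s$ using the fundamental inverse relationship between $S$ and $T$. Concretely, for a fixed component $u=u_i$, Tonelli's theorem together with the integrability hypothesis \eqref{tonelli} justifies writing $\widehat{\c}(\lambda)=\E\!\left[\int_0^\infty e^{-\lambda t}u(S(t))\,\dd t\right]$. The key observation is that the (random) function $t\mapsto u(S(t))$ is constant on each interval where $T$ is flat, so the time integral should be re-expressed as a Lebesgue--Stieltjes integral in the variable $s$ against $\dd T(s)$. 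Since $S$ is the right-continuous inverse of the strictly increasing $T$ (strict by \eqref{ctsrv}), one has the identity $\{t : S(t)\le s\}=\{t : t < T(s)\}$ up to null sets, which is exactly the statement that the pushforward of Lebesgue measure $\dd t$ under $S$ is the Stieltjes measure $\dd T(s)$. This yields
\begin{align*}
\int_0^\infty e^{-\lambda t}u(S(t))\,\dd t
=\int_0^\infty e^{-\lambda T(s)}u(s)\,\dd T(s).
\end{align*}

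Next I would take the expectation of the right-hand side and use the independence structure of the Lévy process. Writing $T(s')-T(s)$ for $s'>s$ as independent of $T(s)$ with the same law as $T(s'-s)$, one computes $\E[e^{-\lambda T(s)}\,\dd T(s)]$. The cleanest route is to integrate by parts in $s$: since $\frac{\dd}{\dd s}e^{-\lambda T(s)}$ is not literally defined pathwise, instead I would use the known Laplace-transform identity $\E\!\left[\int_0^\infty e^{-\lambda T(s)}g(s)\,\dd T(s)\right]$, which can be evaluated by noting $e^{-\lambda T(s)}\,\dd T(s)=-\lambda^{-1}\dd\!\left(e^{-\lambda T(s)}\right)$ plus a correction coming from the jumps of $T$; after taking expectations the jump correction and the drift combine, via the Lévy--Khintchine exponent, to give the factor $\Psi(\lambda)/\lambda$. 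Equivalently and more transparently: differentiate the scalar identity $\E[e^{-\lambda T(s)}]=e^{-s\Psi(\lambda)}$ in $s$ to get $\E\!\left[\lambda\,e^{-\lambda T(s)}\tfrac{\dd}{\dd s}T(s)\right]$ in a distributional sense equal to $\Psi(\lambda)e^{-s\Psi(\lambda)}$, hence $\E[e^{-\lambda T(s)}\,\dd T(s)]=\tfrac{\Psi(\lambda)}{\lambda}e^{-s\Psi(\lambda)}\,\dd s$. Substituting back,
\begin{align*}
\widehat{\c}(\lambda)
=\frac{\Psi(\lambda)}{\lambda}\int_0^\infty e^{-s\Psi(\lambda)}\w(s)\,\dd s
=\frac{\Psi(\lambda)}{\lambda}\,\widehat{\w}(\Psi(\lambda)),
\end{align*}
which is the claimed identity after multiplying through by $\lambda$.

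I expect the main obstacle to be making the measure-theoretic change of variables and the ``$\E[e^{-\lambda T(s)}\dd T(s)]$'' step rigorous rather than formal — in particular, justifying the identity $\int_0^\infty e^{-\lambda t}u(S(t))\,\dd t=\int_0^\infty e^{-\lambda T(s)}u(s)\,\dd T(s)$ almost surely (one must handle the possible atom of $\dd T$ at jump times and the fact that $S$ can be constant, but continuity of $S$ in $t$, noted after \eqref{S}, and strict monotonicity of $T$ keep this clean), and justifying the interchange of expectation with the Stieltjes integral. The Tonelli step needs the integrability assumption \eqref{tonelli} applied not to $u(S(t))$ directly but after the change of variables; I would reconcile this by first doing everything with $u$ replaced by $|u|$ to get finiteness, then with $u$ itself. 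A secondary technical point is that $\Psi(\lambda)>0$ for $\lambda>0$ unless $T\equiv0$, so $\widehat{\w}(\Psi(\lambda))$ is evaluated at a positive argument and is well-defined whenever $\widehat{\w}$ exists there; the degenerate case $\Psi\equiv0$ is excluded since then $S$ is not well-defined as a time change. Everything else is routine Laplace-transform bookkeeping.
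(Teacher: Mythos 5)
Your overall architecture (Tonelli, then a change of variables from $t$ to the subordinator level $s$) can be made to work, but the two displayed identities at its core are false as written, and the failure is not merely technical. The pushforward statement you invoke is correct only for integrands of the form $g(S(t))$: it gives $\int_0^\infty g(S(t))\,\dd t=\int_0^\infty g(s)\,\dd T(s)$. Your integrand $e^{-\lambda t}u(S(t))$ also depends on $t$ directly, and on an interval $[T(s_0-),T(s_0))$ where $S\equiv s_0$ (such intervals are unavoidable: \eqref{ctsrv} rules out a purely deterministic drift, so $T$ has jumps a.s.) the factor $e^{-\lambda t}$ varies while your substitution freezes it at $e^{-\lambda T(s_0)}$; the true contribution of that interval is $u(s_0)\lambda^{-1}\bigl(e^{-\lambda T(s_0-)}-e^{-\lambda T(s_0)}\bigr)$, not $u(s_0)e^{-\lambda T(s_0)}\Delta T(s_0)$. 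Likewise your evaluation $\E[e^{-\lambda T(s)}\,\dd T(s)]=\lambda^{-1}\Psi(\lambda)e^{-s\Psi(\lambda)}\,\dd s$ rests on a chain rule that fails for jump processes; the compensation formula gives $\E[e^{-\lambda T(s)}\,\dd T(s)]=\Psi'(\lambda)e^{-s\Psi(\lambda)}\,\dd s$ with $\Psi'(\lambda)=b+\int_0^\infty ze^{-\lambda z}\,\nu(\dd z)$. A concrete sanity check: take $u\equiv1$ and $T$ the $\alpha$-stable subordinator, $\Psi(\lambda)=\lambda^\alpha$. The left side of your change-of-variables display equals $1/\lambda$, while the expectation of its right side is $\Psi'(\lambda)/\Psi(\lambda)=\alpha/\lambda$. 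Your two errors happen to cancel in the final formula, but neither displayed step is true, so the argument as written does not establish the theorem; the sentence asserting that ``the jump correction and the drift combine \dots to give the factor $\Psi(\lambda)/\lambda$'' is precisely the claim that needs proof, and the formulas you offer in its support are the ones that fail.

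The repair is to replace $e^{-\lambda T(s)}\,\dd T(s)$ throughout by $-\lambda^{-1}\,\dd\bigl(e^{-\lambda T(s)}\bigr)$: the identity $\int_0^\infty e^{-\lambda t}u(S(t))\,\dd t=\lambda^{-1}\int_0^\infty u(s)\,\dd\bigl(-e^{-\lambda T(s)}\bigr)$ is genuinely pathwise (it accounts for the flat pieces of $S$ exactly), and $\E\bigl[\dd\bigl(-e^{-\lambda T(s)}\bigr)\bigr]=\Psi(\lambda)e^{-s\Psi(\lambda)}\,\dd s$ then follows directly from \eqref{le}; here the jump weight is $1-e^{-\lambda z}$, which is exactly what the L\'{e}vy--Khintchine form of $\Psi$ integrates. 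The paper avoids Stieltjes calculus on jump paths altogether: it writes $\int_0^\infty e^{-\lambda t}u(S(t))\,\dd t=\lambda^{-1}u(S(\tau))$ in expectation for an independent $\tau=_{\dist}\exp(\lambda)$ and proves, by conditioning on $\tau$ and one integration by parts on the distribution function of $T(s)$, that $S(\tau)=_{\dist}\exp(\Psi(\lambda))$, from which the factor $\Psi(\lambda)/\lambda$ drops out immediately. Either route closes the gap; as it stands, your central step would fail.
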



\textcolor{black}{The proof of Theorem~\ref{laplacespace} is given in the Appendix.}

\subsection{\label{imp}Fractional equations}

We now use Theorem~\ref{laplacespace} to solve fractional equations. Consider the fractional system,
\begin{align}\label{ceq}
\begin{split}
\frac{\partial}{\partial t}\c
&=\DD\big(\A\c+\r\big),\quad x\in\domain\subseteq\R^{d},\,t>0,\\
\c(x,0)
&=\c_{\text{init}}(x),
\end{split}
\end{align}
where $\domain\subseteq\R^{d}$ is some $d$-dimensional spatial domain and the initial condition $\c_{\text{init}}$ is a given bounded function of space. Assume $\DD$ is the integro-differential operator in \eqref{DD} with memory kernel $M(t)$ defined by its Laplace transform,
\begin{align}\label{mdef}
\widehat{M}(\lambda)
=\frac{1}{\Psi(\lambda)},
\end{align}
and assume $M$ is sufficiently regular so that
\begin{align}\label{Mreg}
\lim_{t\to0+}\int_{0}^{t}M(t')\,\dd t'=0.
\end{align}Assume the operator $\A$ commutes with scalar multiplication, Laplace transforms in time, and the fractional temporal operator $\DD$. That is, assume
\begin{align}
\A\beta\mathbf{w}(x,t)
&=\beta\A\mathbf{w}(x,t),\label{commute1}\\
\widehat{(\A\mathbf{w})}(x,\lambda)
&=\A\widehat{\mathbf{w}}(x,\lambda),\label{commute2}\\
\DD\A\mathbf{w}(x,t)
&=\A\DD\mathbf{w}(x,t),\label{commute3}
\end{align}
for scalar constants $\beta>0$ and functions
\begin{align*}
\mathbf{w}:\overline{V}\times[0,\infty)\mapsto\R^{n}
\end{align*}
in the domain of $\A$. For example, if $\A$ is a sufficiently regular linear differential operator acting on the spatial variable $x$ (as in \eqref{ex}), then \eqref{commute1}-\eqref{commute3} hold. More generally, $\A$ could be a linear integro-differential operator acting on $x$. In addition, $\A$ need not even act on $x$, but could instead simply be a matrix $\A=R\in\R^{n\times n}$, in which case \eqref{ceq} becomes a system of fractional ODEs (see section~\ref{sectionode}).

Suppose $\w(x,s)=({u_{i}}(x,s))_{i=1}^{n}$ satisfies the system of integer order equations corresponding to \eqref{ceq} with the same initial condition,
\begin{align}\label{weq}
\begin{split}
\frac{\partial}{\partial s}\w
&=\A\w+\r,\quad x\in\domain\subseteq\R^{d},\,s>0,\\
\w(x,0)
&=\c_{\text{init}}(x).
\end{split}
\end{align}
Assuming that \eqref{ceq} and \eqref{weq} are sufficiently regular to admit Laplace transformation, we claim that the following definition of $\c(x,t)$ satisfies \eqref{ceq},
\begin{align}\label{defn}
\c(x,t)
:=\E[\w(x,S(t))].
\end{align}

To see this, we work with the Laplace transforms of \eqref{ceq} and \eqref{weq}, which are
\begin{align}
\lambda\widehat{\c}(x,\lambda)
-\c_{\text{init}}(x)
&=\frac{\lambda}{\Psi(\lambda)}\Big[\A\widehat{\c}(x,\lambda)+\frac{\r(x)}{\lambda}\Big],\label{ceql}\\
{\lambda}\widehat{\w}(x,{\lambda})
-\c_{\text{init}}(x)
&=\A\widehat{\w}(x,{\lambda})+\frac{\r(x)}{{\lambda}}.
\label{weql}
\end{align}
In obtaining \eqref{ceql}-\eqref{weql}, we used \eqref{commute1}-\eqref{commute3} and that
\begin{align*}
\widehat{\DD\c}
=\frac{\lambda}{\Psi(\lambda)}\widehat{\c},\quad
\widehat{\DD\r}
=\frac{\lambda}{\Psi(\lambda)}\widehat{\r}
=\frac{\r}{\Psi(\lambda)},
\end{align*}
which follows from the convolution form of $\DD$ in \eqref{DD}, the relation in \eqref{mdef}, and \eqref{Mreg}. Now, it is a straightforward algebra exercise to use \eqref{commute1}-\eqref{commute3} to show that if $\widehat{\u}$ satisfies \eqref{weql} and $\widehat{\c}$ and $\widehat{\u}$ satisfy the following relation,
\begin{align}\label{nicerela}
\lambda\widehat{\c}(x,\lambda)
=\Psi(\lambda)\widehat{\w}(x,\Psi(\lambda)),
\end{align}
then $\widehat{\c}$ satisfies \eqref{ceql}. Of course, \eqref{nicerela} is precisely the relation found in Theorem~\ref{laplacespace} for each fixed $x\in\overline{\domain}$.

Summarizing, if we define $\c$ by \eqref{defn}, then Theorem~\ref{laplacespace} implies that $\c$ and $\u$ satisfy \eqref{nicerela}. Therefore, if $\u$ satisfies the Laplace space equation in \eqref{weql} (which is equivalent to \eqref{weq}), then $\c$ satisfies the Laplace space equation in \eqref{ceql}. But, the Laplace space equation \eqref{ceql} is equivalent to \eqref{ceq}. Hence, $\c$ satisfies \eqref{ceq} as desired.

\subsection{\label{bc}Boundary conditions}

In the case that the spatial domain $\domain\subseteq\R^{d}$ is bounded, we impose boundary conditions. Suppose the solution $\w(x,s)$ to \eqref{weq} satisfies boundary conditions of the form,
\begin{align}\label{bcs}
A(x)\frac{\partial}{\partial\n}\w(x,s)+B(x)\w(x,s)
=\v(x),\quad x\in\partial\domain,
\end{align}
where $\frac{\partial}{\partial\n}$ denotes differentiation with respect to the normal derivative, $A(x),B(x)\in\R^{n\times n}$ are given space-dependent matrices, and $\v(x)\in\R^{n}$ is a given space-dependent vector. Then, it is immediate that $\c(x,t):=\E[\w(x,S(t))]$ satisfies the boundary conditions in \eqref{bcs} assuming sufficient regularity to interchange $\frac{\partial}{\partial\n}$ with $\E$. Similarly, if $\domain\subseteq\R^{d}$ is unbounded, then appropriate growth conditions on $\u$ also apply to $\c$.

\subsection{\label{stability}\textcolor{black}{Steady-states and stability}}

The formula \eqref{defn} relates the fractional order solution $\c$ to the integer order solution ${\w}$. It follows from \eqref{defn} that if $\w$ approaches a finite steady-state,
\begin{align}\label{ltl90}
\w_{\text{ss}}(x)
:=\lim_{s\to\infty}\w(x,s)\in\R^{n},
\end{align}
then $\c$ inherits this same finite steady-state,
\begin{align}\label{ltl91}
\lim_{t\to\infty}\c(x,t)
=\w_{\text{ss}}(x).
\end{align}
To see this, fix $x\in\overline{V}$ and let $\w(x,s)$ be any bounded function of time $s\in[0,\infty)$ satisfying \eqref{ltl90}. Since $S(t)\to\infty$ as $t\to\infty$ with probability one, the Lebesgue dominated convergence theorem yields \eqref{ltl91}.

We emphasize that the limit in \eqref{ltl90} is assumed to be finite, since it is possible for $\w$ to diverge and $\c$ to approach a finite limit (see section~\ref{sectionode} below). {\black Note that a steady-state $\u_{\text{ss}}$ of \eqref{weq} satisfies $\A\u_{\text{ss}}+\r=0$. In the case that $\A$ is the reaction diffusion operator in \eqref{ex}-\eqref{fpo}, the steady-state $\u_{\text{ss}}$ satisfies the spatial differential equation $(\text{diag}(\LL_{1},\dots,\LL_{n})+R(x))\u_{\text{ss}}=-\r$. For a simple example, see section~\ref{morph}.}

{\black 
One consequence of \eqref{ltl91} is that the stability of an integer order equation implies the stability of the corresponding fractional order equation. Interestingly, the converse of this statement is in general false. That is, stability of a fractional equation does not imply stability of the corresponding integer order equation (see section~\ref{sectionode} below). 
}

{\black A second consequence of \eqref{ltl91} is that so-called \emph{linear stability} of integer order equations with nonlinear reactions implies linear stability of fractional equations with nonlinear reactions. Recall that a steady-state of a nonlinear system is said to be linearly stable if the system obtained by linearizing about the steady-state is stable \cite{cross1993, yadav2006}. Consider} the system of fractional equations,
\begin{align}\label{nonlinear}
\frac{\partial}{\partial t}\c
=\DD\Big(\A\c+\f(\c)\Big),
\end{align}
where $\f:\R^{n}\mapsto\R^{n}$ is some nonlinear function of $\c$. 
Suppose that {\black \eqref{nonlinear}} has a steady-state, $\c_{\text{ss}}\in\R^{n}$, which implies
\begin{align}\label{vanish}
\textcolor{black}{\A\c_{\text{ss}}+\f(\c_{\text{ss}})
=0.}
\end{align}
{\black Define $\b(x,t)$ via the relation $\c(x,t)=\c_{\text{ss}}+\eps\b(x,t)$, and assume $\b(x,0)$ is order one and $\eps\ll1$. 
Differentiating $\b(x,t)$, Taylor expanding $\f$ about $\c_{\text{ss}}$, and using \eqref{vanish} yields
\begin{align}
\frac{\partial}{\partial t}\b
=\frac{1}{\eps}\frac{\partial}{\partial t}\c
&=\frac{1}{\eps}\DD\Big(\A(\c_{\text{ss}}+\eps\b)+\f(\c_{\text{ss}}+\eps\b)\Big)\nonumber\\
&=\DD\Big(\A\b+R_{\f}\b\Big)
+\mathcal{O}(\eps),\label{beqn0}
\end{align}
where} $R_{\f}\in\R^{n\times n}$ is the Jacobian of $\f$ evaluated at $\c_{\text{ss}}$. {\black Neglecting the order $\eps$ term in \eqref{beqn0} yields the leading order linear equation,
\begin{align}\label{beqn}
\frac{\partial}{\partial t}\b_{0}
&=\DD\Big(\A\b_{0}+R_{\f}\b_{0}\Big).
\end{align}
The steady-state $\c_{\text{ss}}$ is said to be \emph{linearly stable} if $\lim_{t\to\infty}\b_{0}=0$ \cite{cross1993}. Note that linear stability does not always imply stability of the nonlinear system \eqref{nonlinear}, meaning $\lim_{t\to\infty}\b_{0}=0$ may not imply $\lim_{t\to\infty}\c=\c_{\text{ss}}$ \cite{normand1977}.}

Since \eqref{beqn} is linear, the solution is $\textcolor{black}{\b_{0}}(x,t)=\E[\w(x,S(t))]$ where $\w(x,s)$ satisfies \eqref{beqn} with $\DD$ replaced by the identity. Hence, if $\lim_{s\to\infty}\w(x,s)=0$, then {\black \eqref{ltl91} implies} $\lim_{t\to\infty}\textcolor{black}{\b_{0}}(x,t)=0$, and thus the steady-state, $\c_{\text{ss}}$, for the fractional nonlinear equation {\black \eqref{nonlinear}} is {\black linearly} stable. {\black But}, the equation for $\w$ is {\black merely} the linearization of \eqref{nonlinear} with $\DD$ replaced by the identity. Therefore, we conclude that linear stability of a nonlinear, integer order equation implies linear stability of the corresponding nonlinear, fractional order equation. However, we {\black again} caution that stability of a fractional equation does not imply stability of the corresponding integer order equation (see section~\ref{sectionode} below). Summarizing, {\black linear} stability of an integer order equation is a sufficient (but not necessary) condition for {\black linear} stability of the corresponding fractional equation.

\section{\label{stochrep}Stochastic representation}

In this section, we construct a stochastic process whose probability density satisfies \eqref{form0} in the case that $\r\equiv0$ and the operator $\A$ is given by \eqref{ex} and the reaction matrix $R(x)$ has a certain probabilistic structure. In particular, we assume that for each $x\in\overline{\domain}\subseteq\R^{d}$, the matrix $R(x)$ has nonnegative off-diagonal entries (meaning $R(x)$ is a so-called Metzler matrix \cite{farina2011}) and the diagonal entries are such that each column of $R(x)$ sums to zero.

\subsection{Internal Markov process}

{\black In order to construct a non-Markovian stochastic process $(Y(t),J(t))$ whose probability density satisfies a fractional equation, we first construct a Markov process $(X(s),I(s))$. We then define $(Y(t),J(t))$ as a subordination (i.e.\ a random time change) of $(X(s),I(s))$.} 

{\black Suppose $\{X(s)\}_{s\ge0}$ satisfies the stochastic differential equation (SDE),
\begin{align}\label{sde}
\dd X(s)
=\mu(X(s),I(s))\,\dd s
+\sigma(X(s),I(s))\,\dd W(s),
\end{align}
where $\{W(s)\}_{s\ge0}$ is a standard $m$-dimensional Brownian motion {\black and $\mu$ and $\sigma$ are as in \eqref{fpo}}. Notice that the SDE \eqref{sde} depends on $I(s)$. We suppose $\{I(s)\}_{s\ge0}$ is a continuous-time jump process on $\{1,\dots,n\}$ that jumps from state $I(s)=i$ to state $j\neq i$ at rate $(R(X(s)))_{j,i}\ge0$ at time $s\ge0$.}

In words, $X(s)$ {\black follows} an SDE whose righthand side switches according to the jump process $I(s)$, and the jump rates of $I(s)$ may depend on the position $X(s)$. {\black To illustrate, if the initial state is $I(0)=i$, then $X(s)$ diffuses with drift $\mu(X(s),i)$ and diffusivity $\tfrac{1}{2}\sigma(X(s),i)^{2}$ until $I$ jumps to a new state $j\neq i$. Then, $X(s)$ diffuses with drift $\mu(X(s),j)$ and diffusivity $\tfrac{1}{2}\sigma(X(s),j)^{2}$ until $I$ jumps again, etc.} The process $(X(s),I(s))$ is sometimes called a hybrid switching diffusion \cite{yinbook}. The word ``hybrid'' is used because the process combines the continuous dynamics of $X(s)$ with the discrete dynamics of $I(s)$. {\black For a specific example of $(X(s),I(s))$, see section~\ref{exswitch} below.}

{\black The precise mathematical definition of $(X(s),I(s))$ is in terms of its infinitesimal generator. Precisely, $\{(X(s),I(s))\}_{s\ge0}$ is a }Markov process on the state space $\overline{\domain}\times\{1,\dots,n\}$ with generator $\G$ defined by
\begin{align*}
\G f(x,i)
=\LL_{i}^{*}f(x,i)+\sum_{j=1}^{n}(R^{\top}(x))_{i,j}f(x,j),
\end{align*}
where $\LL_{i}^{*}$ is the formal adjoint of $\LL_{i}$ in \eqref{fpo} and $R^{\top}$ is the transpose of $R$, meaning $(R^{\top}(x))_{i,j}=(R(x))_{j,i}$. The generator $\G$ acts on functions $f(x,i):\overline{\domain}\times\{1,\dots,n\}\mapsto\R$ which are twice-continuously differentiable in $x$. In the language of Markov processes, $\G$ is the backward operator corresponding to the forward operator $\A$.

Let $\q_{i}(x,s)$ be the probability density that $X(s)=x$ and $I(s)=i$. If we define the vector $\q(x,s)=(\q_{i}(x,s))_{i=1}^{n}\in\R^{n}$, then the forward Fokker-Planck equation for $\q$ is
\begin{align}\label{qpde}
\frac{\partial}{\partial s}\q
=\A\q,\quad x\in\domain\subseteq\R^{d},\,s>0.
\end{align}
In the case that $\domain$ has a boundary, boundary conditions are imposed on $\q$ corresponding to the assumed behavior of $X(s)$ on the boundary. For example, if $X(s)$ reflects from some portion of the boundary $\partial\domain_{0}\subseteq\partial\domain$ when $I(s)=i$, then
\begin{align*}
\frac{\partial}{\partial\n}\q_{i}(x,s)=0,\quad x\in\partial\domain_{0}.
\end{align*}
Alternatively, if $X(s)$ is absorbed at $\partial\domain_{0}$ when $I(s)=i$, then
\begin{align*}
\q_{i}(x,s)=0,\quad x\in\partial\domain_{0}.
\end{align*}

\subsection{Random time changed process}

Let $\{S(t)\}_{t\ge0}$ be the inverse subordinator in \eqref{S} that is taken to be independent of $\{(X(s),I(s))\}_{s\ge0}$. Define the stochastic process
\begin{align}\label{yj}
\big(Y(t),J(t)\big)
:=\big(X(S(t)),I(S(t))\big),\quad t\ge0.
\end{align}
Let $\p_{i}(x,t)$ be the probability density that $Y(t)=x$ and $J(t)=i$ and define the vector $\p(x,s)=(\p_{i}(x,s))_{i=1}^{n}$. By conditioning on the value of $S(t)$ and using independence, it follows that
\begin{align*}
\p(x,t)
=\E[\q(x,S(t))].
\end{align*}
Therefore, our analysis in section~\ref{solution} yields
\begin{align}\label{ppde}
\frac{\partial}{\partial t}\p
=\DD\A\p,\quad x\in\domain\subseteq\R^{d},\,t>0,
\end{align}
and $\p$ satisfies the same boundary conditions as $\q$.

Summarizing, the (mesoscopic) fractional reaction-subdiffusion equations in \eqref{ppde} describe (microscopic) individual stochastic molecules which evolve according to \eqref{yj}. In particular, $Y(t)$ denotes the spatial position of a particle and $J(t)$ denotes its discrete state. We now investigate the dynamics of $(Y(t),J(t))$ to understand what fractional reaction-diffusion equations of the form \eqref{ppde} imply about the dynamics of individual molecules. 

We see from \eqref{yj} and \eqref{sde} that the particle subdiffuses with dynamics that switch according to its discrete state. In particular, the path of $Y(t)$ follows the path of $X(s)$, but the motion of $Y(t)$ is punctuated by ``pauses'' of the inverse subordinator $S(t)$ (which correspond to jumps of the subordinator $T(s)$, see section~\ref{exswitch}). Analogously, $J(t)$ follows the path of $I(s)$, but $J(t)$ pauses when $S(t)$ pauses. Importantly, notice that $J(t)$ pauses exactly when $Y(t)$ pauses, and therefore $J(t)$ cannot jump when $Y(t)$ is paused. Hence, we obtain one simple microscopic property implied by the mesoscopic equations in \eqref{ppde}.

Next, we investigate the time between jumps of $J(t)$. In the case that $R(x)$ is constant in space, the jump times of $I(s)$ are exactly exponentially distributed. In particular, the time that $I(s)$ spends in state $i$ is an exponential random variable with rate $\lambda_{i}:=\sum_{j\neq i}R_{j,i}$. Letting $\sigma$ denote this exponential time, it follows that $J(t)$ spends time $T(\sigma)$ in state $i$. We thus obtain an additional microscopic property implied by the mesoscopic equations in \eqref{ppde}.

Moreover, we can compute the probability distribution for the sojourn time $T(\sigma)$ in the typical case that the fractional operator is the Riemann-Liouville derivative, $\DD=\D$ in \eqref{rl} with $\alpha\in(0,1)$. In this case, the subordinator $T$ is an $\alpha$-stable subordinator. A direct calculation shows that this random time has the following distribution \cite{pillai1990, meerschaert2011},
\begin{align}\label{ml}
\P(T(\sigma)>t)
=E_{\alpha}(-\lambda_{i}t^{\alpha}),\quad t>0,
\end{align}
where $E_{\alpha}$ is the Mittag-Leffler function,
\begin{align*}
E_{\alpha}(z)
:=\sum_{k=0}^{\infty}\frac{z^{k}}{\Gamma(1+\alpha k)}.
\end{align*}
Hence, a microscopic condition implied by the mesoscopic equations in \eqref{ppde} in this case is that the particle switches states at Mittag-Leffler distributed times described by \eqref{ml}.

\subsection{\label{stochsim}Stochastic simulation}

Having constructed the stochastic process $(Y(t),J(t))$ in \eqref{yj} that corresponds to the fractional equations \eqref{ppde}, we can simulate stochastic paths of this process. This simulation involves two main steps: (i) approximating the path of the internal Markov process $\{(X(s_{k}),I(s_{k}))\}_{k}$ on some internal time mesh $\{s_{k}\}_{k}$, and (ii) approximating the path of the inverse subordinator $\{S(t_{k})\}_{k}$ on some time mesh $\{t_{k}\}_{k}$.

Step (i) is well-studied. For example, see Chapter 5 in \cite{yinbook}. Furthermore, if the transition rate matrix is constant ($R(x)\equiv R$), then step (i) entails merely simulating paths of $I(s)$ (which can be done exactly and efficiently with the Gillespie algorithm \cite{gillespie1977}) and simulating paths of $X(s)$ between jumps of $I(s)$, which can be done with any simulation method for SDEs (see \cite{kloeden2013}).

Step (ii) depends on the particular subordinator $T(s)$ under consideration. In the case that $T(s)$ is an $\alpha$-stable subordinator, Magdziarz et al.\ \cite{magdziarz2007} developed an efficient algorithm for simulating paths of $T(s)$ and $S(t)$. Carnaffan and Kawai \cite{carnaffan2017} developed methods for simulating paths of $T(s)$ and $S(t)$ for the cases that $T(s)$ is a tempered stable subordinator or a gamma subordinator.

Having obtained the simulated values $\{(X(s_{k}),I(s_{k}))\}_{k}$ and $\{S(t_{k})\}_{k}$ by the methods just referenced, one can obtain $X(S(t_{k}))$ from a simple linear interpolation between $X(s_{\overline{k}})$ and $X(s_{\overline{k}+1})$, where the index $\overline{k}$ is chosen so that $s_{\overline{k}}\le S(t_{k})\le s_{\overline{k}+1}$. Similarly, one can set $J(S(t_{k}))=I(s_{\widetilde{k}})$ where $\widetilde{k}$ is the largest index such that $s_{\widetilde{k}}\le S(t_{k})$. We illustrate this method in section~\ref{exswitch} below.

\section{\label{examples}Biophysical applications}

We now apply our results to some biophysical systems which have typically been modeled by normal diffusion.

\subsection{Subdiffusive morphogen gradient formation\label{morph}}

The formation of morphogen gradients, such as the bicoid gradient of \emph{Drosophila}, is often modeled by diffusion away from a localized source and subsequent degradation. The degradation often results from binding to receptors in the cell membrane \cite{porcher2010}. The basic theory can be illustrated with a reaction-diffusion equation \cite{berezhkovskii2010},
\begin{align}\label{gradient}
\frac{\partial}{\partial s}{u}
=D\frac{\partial^{2}}{\partial x^{2}}{u}-k{u},\quad x>0,\,s>0,
\end{align}
modeling the protein (morphogen) concentration ${u}(x,s)$ at position $x$ at time $s$, which diffuses with diffusivity $D>0$ and degrades at rate $k>0$. The protein source can be modeled by specifying a constant flux $\varphi>0$ boundary condition at $x=0$,
\begin{align}\label{source}
-D\frac{\partial}{\partial x}{u}
=\varphi>0,\quad x=0,
\end{align}
and it is assumed that there is no protein initially,
\begin{align}\label{ic}
{u}=0,\quad s=0.
\end{align}
The solution to \eqref{gradient}-\eqref{ic} is \cite{bergmann2007}
\begin{align}\label{timedep}
\begin{split}
u(x,s)
=u_{\text{ss}}(x)\bigg[1&-\frac{1}{2}\text{erfc}\Big(\sqrt{\overline{s}}-\frac{\overline{x}}{\sqrt{4\overline{s}}}\Big)\\
&-\frac{e^{2\overline{x}}}{2}\text{erfc}\Big(\sqrt{\overline{s}}+\frac{\overline{x}}{\sqrt{4\overline{s}}}\Big)\bigg],
\end{split}
\end{align}
where $\overline{x}=(\sqrt{k/D})x$ and $\overline{s}=ks$ are dimensionless space and time variables and the steady-state solution is the decaying exponential,
\begin{align}\label{wss}
u_{\text{ss}}(x)
=\frac{\varphi}{\sqrt{Dk}} e^{-\overline{x}}.
\end{align}

A common tool to characterize the time it takes the time-dependent gradient \eqref{timedep} to approach the steady-state gradient \eqref{wss} is the \emph{accumulation time} \cite{berezhkovskii2010, berezhkovskii2011}. The accumulation time $\tau(x)$ is defined by
\begin{align}\label{tau}
\tau(x)
:=\int_{0}^{\infty}-s\frac{\partial R}{\partial s}(x,s)\,\dd s
=\int_{0}^{\infty}R(x,s)\,\dd s,
\end{align}
where $R(x,s)$ is the local relaxation function which measures the approach of $u(x,s)$ to $u_{\text{ss}}(x)$,
\begin{align}\label{RRR}
R(x,s)
=\frac{u(x,s)-u_{\text{ss}}(x)}{u(x,0)-u_{\text{ss}}(x)}
=1-\frac{u(x,s)}{u_{\text{ss}}(x)}.
\end{align}
The relaxation function $R(x,s)$ is similar to a survival probability, and thus the accumulation time $\tau(x)$ is analogous to a mean first passage time \cite{berezhkovskii2010, berezhkovskii2011}. Using \eqref{timedep}, it is straightforward to calculate that \eqref{tau} is
\begin{align*}
\tau(x)
=\frac{1}{2k}\big(1+(\sqrt{k/D})x\big).
\end{align*}

We can now use the analysis in sections~\ref{solution}-\ref{stochrep} above to investigate how this standard theory is modified if the proteins move subdiffusively and the degradation is subdiffusion-limited. Indeed, since degradation requires that a protein reaches a receptor, it is quite plausible that the degradation could be limited by the subdiffusive proteins. Analogous to \eqref{gradient}-\eqref{source}, the subdiffusive protein concentration $c(x,t)$ now satisfies
\begin{align}\label{second}
\begin{split}
\frac{\partial}{\partial t}c
&=\DD\Big(D\frac{\partial^{2}}{\partial x^{2}}c-kc\Big),\quad x>0,\,t>0,\\
-D\frac{\partial}{\partial x}c
&=\varphi_{0}>0,\quad x=0,\\
c
&=0,\quad t=0,
\end{split}
\end{align}
for some integro-differential operator $\DD$ as in \eqref{DD}. Note that the parameters $D$ and $k$ in \eqref{gradient}-\eqref{source} necessarily differ from the $D$ and $k$ in \eqref{second} (they have different units), but we keep the same notation for simplicity. To solve \eqref{second}, we take the Laplace transform of the time-dependent diffusive solution in \eqref{timedep} 
and use the relation~\eqref{nicerela} of section~\ref{solution} above to obtain the Laplace transform of the solution to \eqref{second},
\begin{align}\label{csubhat}
\begin{split}
\widehat{c}(x,\lambda)
&=\frac{\Psi(\lambda)}{\lambda}\widehat{{u}}(x,\Psi(\lambda))\\
&={u}_{\text{ss}}(x)\frac{\exp(\overline{x}(1-\sqrt{1+\Psi(\lambda)/k}))}{\lambda\sqrt{1+\Psi(\lambda)/k}},
\end{split}
\end{align}
where $\Psi(\lambda)$ is the Laplace exponent corresponding to $\DD$ (see section~\ref{setup}). Multiplying \eqref{csubhat} by $\lambda$ and using that $\Psi(\lambda)\to0$ as $\lambda\to0$ and the final value theorem of Laplace transforms confirms the desired result that $c(x,t)\to {u}_{\text{ss}}(x)$ as $t\to\infty$. That is, the steady-state behavior of the subdiffusive solution is identical to the steady-state behavior of the diffusive solution. This result can also be seen from \eqref{ltl90}-\eqref{ltl91} in section~\ref{stability} above. 

\begin{figure}[t]
\centering
\includegraphics[width=1\linewidth]{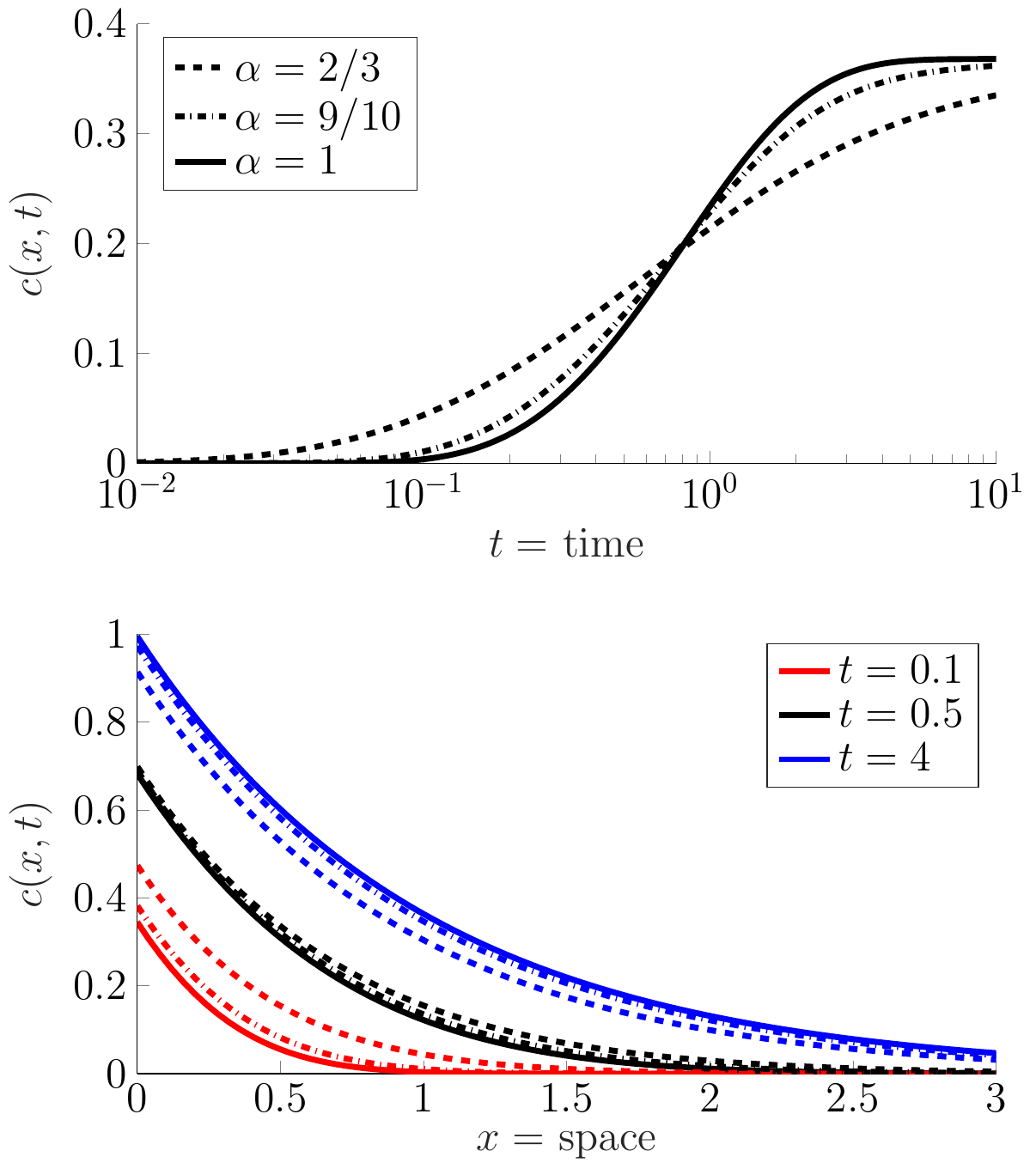}
\caption{\small Diffusive and subdiffusive gradient formation. The top panel plots the solution $c(x,t)$ to \eqref{second} as a function of time at $x=1$. The dashed curve is for $\alpha=2/3$, the dot-dashed curve is for $\alpha=9/10$, and the solid curve is normal diffusion ($\alpha=1$). The bottom panel plots $c(x,t)$ as a function of $x$ for $t=0.1,0.5,4$. The dashed, dot-dashed, and solid curves in the bottom panel correspond respectively to $\alpha=2/3$, $\alpha=9/10$, and $\alpha=1$, as in the top panel. See the text for more details.}
\label{figmorph}
\end{figure}

We are not able to analytically invert the Laplace transform in \eqref{csubhat}. Nevertheless, for a particular choice of $\Psi(\lambda)$, it straightforward to numerically invert \eqref{csubhat} to obtain $c(x,t)$. In Figure~\ref{figmorph}, we plot the protein concentration for the Laplace exponent,
\begin{align}\label{rlpsi}
\Psi(\lambda)=\lambda^{\alpha}, \quad \alpha\in(0,1],
\end{align}
which corresponds to the Riemann-Liouville operator $\DD=\D$ in \eqref{rl}. In the top panel in Figure~\ref{figmorph}, we plot the protein concentration as a function of time for $x=1$ and $\alpha=2/3$, $\alpha=9/10$, and $\alpha=1$ (the case $\alpha=1$ corresponds to normal diffusion). In the bottom panel in Figure~\ref{figmorph}, we plot the protein concentration as a function of space at a sequence of 3 time values. In these plots, we set $k$, $D$, and $\varphi$ to unity, and so the time, space, and concentrations can be interpreted as dimensionless.

From Figure~\ref{figmorph}, we see that (i) the protein concentration grows more quickly at early times for smaller values of $\alpha$ and (ii) the protein concentration grows more slowly at later times for {\black smaller} values of $\alpha$. In addition, the approach of the subdiffusive concentration $c(x,t)$ to the steady-state ${u}_{\text{ss}}(x)$ can be seen in Figure~\ref{figmorph}. However, we claim that the accumulation time formalism described above fails to quantify the timescale of this subdiffusive approach. To see this, define the subdiffusive accumulation time $\tau_{\text{sub}}(x)$ analogously to the diffusive accumulation time in \eqref{tau}, 
\begin{align*}
\tau_{\text{sub}}(x)
:=\int_{0}^{\infty}R_{\text{sub}}(x,t)\,\dd t,
\end{align*}
where the subdiffusive local relaxation function $R_{\text{sub}}(x,t)$ is defined analogously to \eqref{RRR},
\begin{align*}
R_{\text{sub}}(x,t)
=\frac{c(x,t)-c_{\text{ss}}(x)}{c(x,0)-c_{\text{ss}}(x)}
=1-\frac{c(x,t)}{{u}_{\text{ss}}(x)}.
\end{align*}
Using that $\tau_{\text{sub}}(x)$ can be written in terms of the Laplace transform of $R_{\text{sub}}(x,t)$ and using \eqref{csubhat}, we then obtain
\begin{align*}
\tau_{\text{sub}}(x)
&=\lim_{\lambda\to0+}\widehat{R_{\text{sub}}}(x,\lambda)
=\tau(x)\lim_{\lambda\to0+}\frac{\Psi(\lambda)}{\lambda}.
\end{align*}
Using the value $\Psi(\lambda)=\lambda^{\alpha}$ in \eqref{rlpsi} corresponding to the Riemann-Liouville fractional derivative, we obtain that the accumulation time is infinite if $\alpha\in(0,1)$,
\begin{align}\label{tauinf8}
\tau_{\text{sub}}(x)
=\infty.
\end{align}
The result in \eqref{tauinf8} is not surprising since $\tau_{\text{sub}}(x)$ is defined analogously to a mean first passage time and it is known that subdiffusive processes typically have infinite mean first passage times \cite{yuste2004}.


Summarizing, compared to normal diffusion, we see that this subdiffusive model of gradient formation yields a protein concentration that grows faster at early times and slower at later times. Further, while the subdiffusive concentration approaches the diffusive steady state at large time, the accumulation time formalism does not describe this timescale.





\subsection{\label{exswitch}Switching subdiffusivity}

A variety of systems in cell biology are characterized by macromolecules whose diffusivity randomly switches between two or more discrete values \cite{bressloffbook}. For example, AMPA receptors on the post-synaptic membrane switch between fast diffusive and stationary modes \cite{borgdorff2002}. Similarly, LFA-1 receptors switch between fast and slow diffusive modes \cite{das2009,slator2015}. Indeed, the prevalence of such processes in cell biology is evidenced by the various statistical methods that have been created to study single particle tracking data and detect fluctuations in diffusion coefficients \cite{das2009,koo2016,monnier2013,montiel2006,persson2013,slator2018,slator2015}.

Switching diffusion coefficients often model (a) binding/unbinding of the diffusing particle to other molecules that alter its mobility or (b) switching conformations, with distinct mobilities corresponding to the effective sizes of the conformations \cite{cairo2006,wu2018,grebenkov2019}. If the motion of the particles is subdiffusive, and the factors causing the subdiffusion similarly hamper the transitions between states, then the spatiotemporal evolution of the particle population could be modeled by an equation of the form in \eqref{form0}. To illustrate, consider
\begin{align}\label{easyexample}
\begin{split}
&\frac{\partial}{\partial t}
\begin{pmatrix}
c_{0}\\
c_{1}
\end{pmatrix}
=\DD
\Delta
\begin{pmatrix}
K_{0}c_{0}\\
K_{1}c_{1}
\end{pmatrix}
+\DD\begin{pmatrix}
-\lambda_{0} & \lambda_{1}\\
\lambda_{0} & -\lambda_{1}
\end{pmatrix}\begin{pmatrix}
c_{0}\\
c_{1}
\end{pmatrix},
\end{split}
\end{align}
which models a population of particles that switch between two states and subdiffuse in state $j\in\{0,1\}$ with generalized diffusivity $K_{j}$. If $\DD=\D$, then section~\ref{stochrep} shows that the dwell times in each state have the Mittag-Leffler distribution (see \eqref{ml}).

Further, section~\ref{stochrep} shows that the stochastic state of an individual particle following \eqref{easyexample} is given by
\begin{align*}
(Y(t),J(t))
:=(X(S(t)),I(S(t)))\in\overline{\domain}\times\{0,1\},
\end{align*}
where $S(t)$ is the inverse of a subordinator $T(s)$ with L\'{e}vy exponent given by the reciprocal of the Laplace transform of the memory kernel in $\DD$ (see section~\ref{setup}), $I(s)\in\{0,1\}$ is a two-state Markov jump process with jump rates $\lambda_{0},\lambda_{1}$, and $X(s)$ follows the switching SDE,
\begin{align*}
\dd X(s)=\sqrt{2K_{I(s)}}\,\dd W(s).
\end{align*}

\begin{figure}
\centering
\includegraphics[width=1\linewidth]{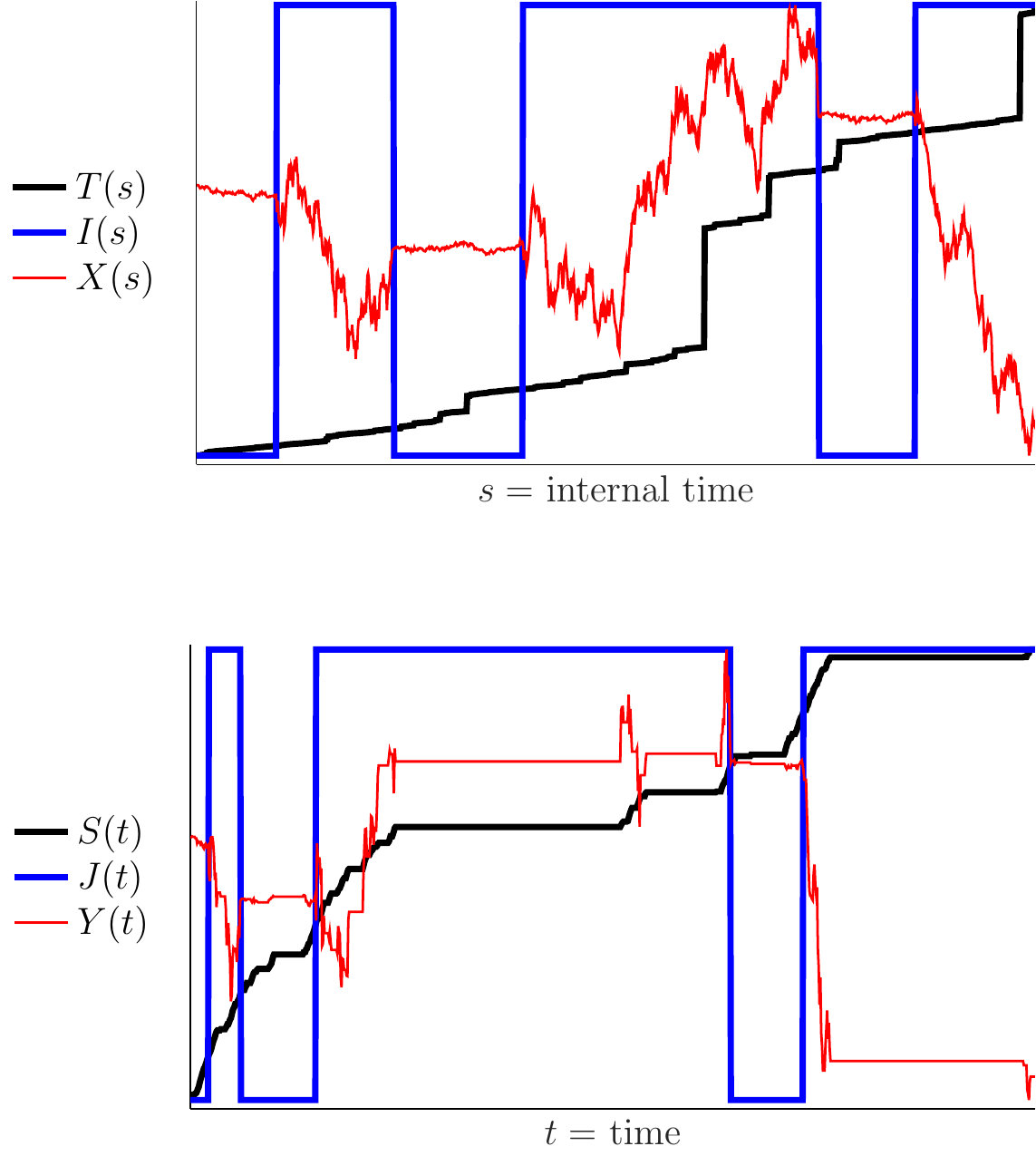}
\caption{\small Switching subdiffusivity. In the top panel, we plot $T(s)$, $I(s)$, and $X(s)$ as functions of the internal time $s$. In the bottom panel, we plot $S(t)$, $J(t)$, and $Y(t)$ as functions of time $t$. See the text for details.}
\label{figsw}
\end{figure}

In Figure~\ref{figsw}, we plot a realization of $(Y(t),J(t))$ and the corresponding realizations of $S(t)$, $X(s)$, $I(s)$, and $T(s)$ by employing the method described in section~\ref{stochsim} above. In this plot, we take the fractional operator to be the Riemann-Liouville derivative, $\DD=\D$, with $\alpha=3/4$, and set $\lambda_{0}=\lambda_{1}$ and $K_{1}/K_{0}=100$ so that the process moves much more quickly in state 1 compared to state 0.

In the top panel of Figure~\ref{figsw}, we plot $T$, $I$, and $X$ as functions of the internal time $s$. Notice that $T(s)$ is an increasing process which occasionally takes large jumps. Notice also that $X(s)$ diffuses much faster when $I(s)=1$ compared to when $I(s)=0$. In the bottom panel, we plot $S$, $J$, and $Y$ as functions of time $t$. Notice that jumps in $T$ correspond to flat periods or ``pauses'' in $S$. Notice also that both $J$ and $Y$ pause when $S$ pauses. In particular, though $I(s)$ switches states at exponentially distributed times, the pauses in $J(t)$ induced by $S(t)$ make $J(t)$ switch states at Mittag-Leffler distributed times (see \eqref{ml}). Furthermore, notice that if $Y$ is not paused, then it moves much more quickly when $J(t)=1$ compared to when $J(t)=0$. We note that we have shifted and scaled the vertical axes in Figure~\ref{figsw} so that the various curves fit on the same plots.



\subsection{Space-dependent switching and gradient formation}

In the example in section~\ref{exswitch} above, the particles switch states at rates that are independent of their spatial position. It was recently shown that space-dependent switching can induce the formation of protein concentration gradients inside a single cell \cite{wu2018}. This mechanism of gradient formation is particularly notable since the more classical mechanism involving diffusion away from a localized source and subsequent degradation (as in section~\ref{morph} above) typically fails at subcellular length scales \cite{kholodenko2009, howard2012} 

This situation has been modeled by \cite{wu2018, PB13}
\begin{align}\label{rd}
\begin{split}
&\frac{\partial}{\partial s}
\begin{pmatrix}
{{u}}_{0}\\
{{u}}_{1}
\end{pmatrix}
=
\Delta
\begin{pmatrix}
K_{0}{{u}}_{0}\\
K_{1}{{u}}_{1}
\end{pmatrix}
+\frac{1}{\eps}\begin{pmatrix}
-\lambda_{0}(x) & \lambda_{1}(x)\\
\lambda_{0}(x) & -\lambda_{1}(x)
\end{pmatrix}\begin{pmatrix}
{{u}}_{0}\\
{{u}}_{1}
\end{pmatrix},
\end{split}
\end{align}
where $u_{j}(x,s)$ is the concentration of molecules in state $j\in\{0,1\}$ at time $s\ge0$ at position $x$ in the finite interval $[0,L]$. Notice that the rate $\lambda_{j}(x)$ of leaving state $j$ depends on the current spatial position. In \eqref{rd}, a small dimensionless parameter $\eps>0$ has been introduced to model switching that occurs on a much faster timescale than gradient formation. It was shown in \cite{PB13} that if $\eps\ll1$, then the large time total concentration ${u}(x):=\lim_{s\to\infty}{{u}}_{0}(x,s)+{{u}}_{1}(x,s)$ is proportional to
\begin{align}\label{uform}
{{u}}(x)
\propto\Big(\frac{\lambda_{1}(x)}{\lambda_{0}(x)+\lambda_{1}(x)}K_{0}+\frac{\lambda_{0}(x)}{\lambda_{0}(x)+\lambda_{1}(x)}K_{1}\Big)^{-1},
\end{align}
assuming no flux boundary conditions for ${{u}}_{j}$ at $x=0,L$. The form in \eqref{uform} means that molecules concentrate in regions where they are more likely to be in a slower state. This point is related to a fairly subtle point regarding It\'{o} versus Stratonovich stochastic integration \cite{PB8, PB10}.

Given the ubiquity of subdiffusive motion inside cells, it is natural to ask if this same mechanism for gradient formation exists for subdiffusion. If the reactions causing the transitions between states is subdiffusion-limited, then the concentrations $c_{0}(x,t)$ and $c_{1}(x,t)$ can be modeled by the equations in \eqref{rd} with the operator $\DD$ applied to the righthand side. Our analysis in section~\ref{solution} thus shows that the subdiffusive concentrations are $c_{j}(x,t)=\E[{{u}}_{j}(x,S(t))]$. It then follows from our analysis in section~\ref{stability} that the large time total subdiffusive concentration is exactly given by \eqref{uform}, which shows that this mechanism of intracellular gradient formation extends to subdiffusive motion.

\subsection{FRAP experiments}

Fluorescence recovery after photobleaching (FRAP) is a commonly used experimental method for studying binding interactions in cells \cite{lippincott2018, ponce2020}. Though subdiffusion is widely observed in cells, the vast majority of mathematical models of FRAP experiments assume that the molecules move by normal diffusion (but see the work of Yuste et al.\ \cite{yuste2014} for a notable exception).

In the case of normal diffusion, the influential work of Sprague et al.\ \cite{sprague2004} considers the following linear reaction-diffusion equations describing a FRAP system in a two-dimensional disk,
\begin{align}\label{spr}
\begin{split}
\frac{\partial {{u}}_{0}}{\partial s}
&=D\Big(\frac{1}{r}\frac{\partial}{\partial r}+\frac{\partial^{2}}{\partial r^{2}}\Big){{u}}_{0}-\kon {{u}}_{0}+\koff {{u}}_{1},\\
\frac{\partial {{u}}_{1}}{\partial s}
&=\kon {{u}}_{0}-\koff {{u}}_{1},
\end{split}
\end{align}
for free (respectively bound) proteins ${{u}}_{0}(r,s)$ (respectively ${{u}}_{1}(r,s)$) at radius $r\in(0,\rho)$ at time $s\ge0$. In order to compare to experimental data, one calculates the so-called FRAP curve, which is the sum ${{u}}_{0}+{{u}}_{1}$ averaged over the disk,
\begin{align}\label{frap}
\text{frap}(s)
:=\frac{2}{\rho^{2}}\int_{0}^{\rho}\Big({{u}}_{0}(r,s)+{{u}}_{1}(r,s)\Big)r\,\dd r.
\end{align}
While an explicit formula for \eqref{frap} is unknown, Sprague et al. \cite{sprague2004} found the following exact formula for its Laplace transform, 
\begin{align}\label{fraplt}
\begin{split}
&\widehat{\text{frap}}(\lambda)
=\frac{1}{\lambda}-\frac{\kon}{(\lambda+\koff)(\kon+\koff)}\\
&-\frac{\koff}{\lambda(\kon+\koff)}\Big(1-2K_{1}(q\rho)I_{1}(q\rho)\Big)\Big(1+\frac{\kon}{\lambda+\koff}\Big),
\end{split}
\end{align}
where $I_{1}$ and $K_{1}$ are modified Bessel functions of the first and second kind and
\begin{align*}
q
=\sqrt{\frac{\lambda}{D}\Big(1+\frac{\kon}{\lambda+\koff}\Big)}.
\end{align*}
Note that \eqref{fraplt} has been normalized so that it yields $\lim_{s\to\infty}\text{frap}(s)=1$. The Laplace transform \eqref{fraplt} can be inverted numerically to yield the FRAP curve \eqref{frap} and then be compared to experimental data \cite{sprague2004}.

We can extend these results to the case that the proteins move by subdiffusion and the reactions are subdiffusion-limited. In particular, suppose the subdiffusion is modeled with the fractional operator $\DD$ in \eqref{DD}. Let $\text{frap}_{\text{sub}}(t)$ denote the subdiffusive FRAP curve defined as in \eqref{frap}, but where ${u}_{0}$ and ${u}_{1}$ are replaced by $c_{0}$ and $c_{1}$ which satisfy \eqref{spr} with $\DD$ applied to the righthand sides. Theorem~\ref{laplacespace} then implies that the Laplace transform of the subdiffusive FRAP curve is given explicitly in terms of \eqref{fraplt},
\begin{align}\label{fraplt2}
\widehat{\text{frap}_{\text{sub}}}(\lambda)
=\frac{\Psi(\lambda)}{\lambda}\widehat{\text{frap}}(\Psi(\lambda)),\quad \lambda>0,
\end{align}
where $\Psi(\lambda)$ corresponds to $\DD$ (see section~\ref{setup}). As above, \eqref{fraplt2} can be inverted numerically to yield the subdiffusive FRAP curve.

\begin{figure}[t]
\centering
\includegraphics[width=1\linewidth]{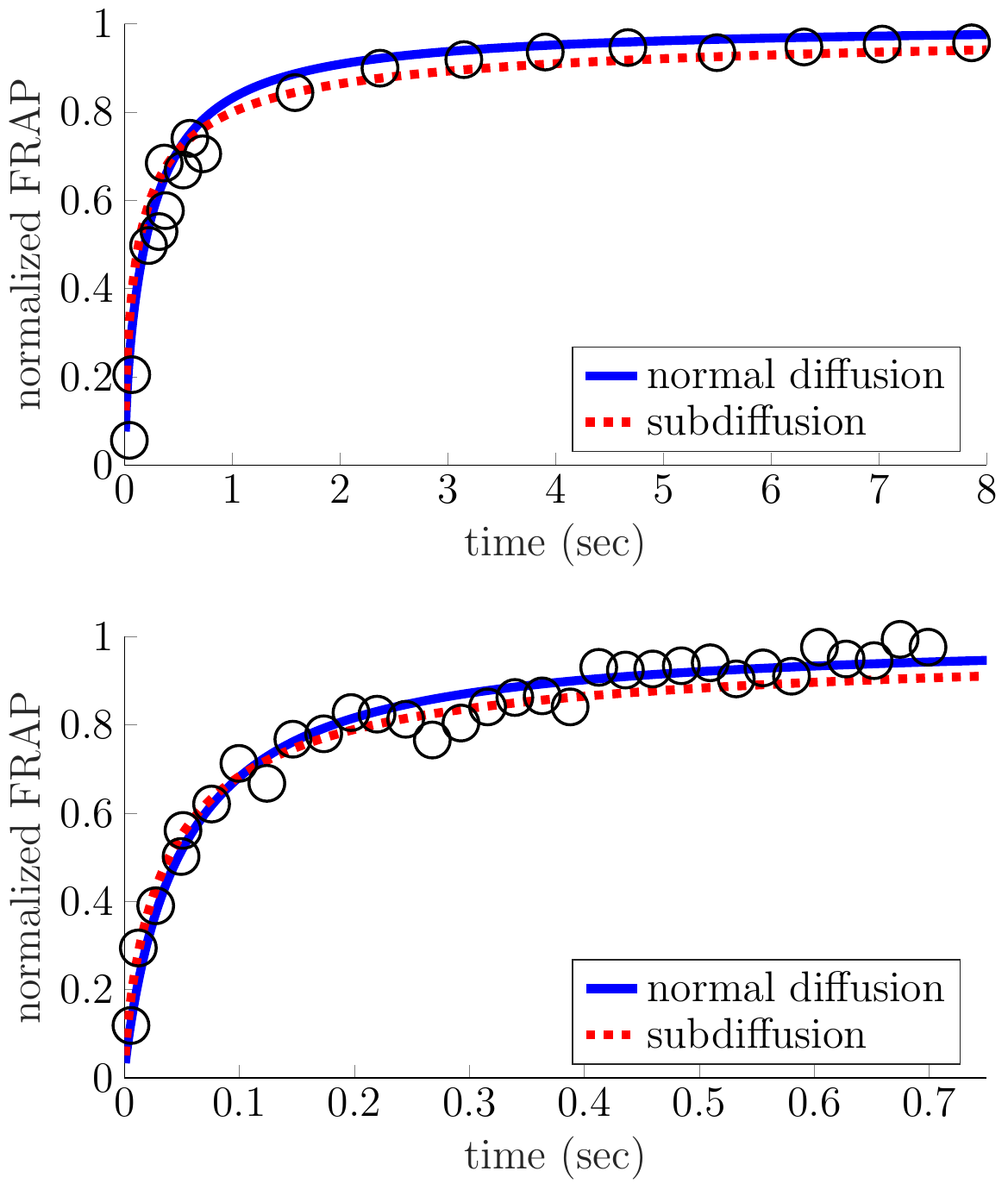}
\caption{\small FRAP curves for normal diffusion (blue solid) and subdiffusion (red dashed) can fit experimental data (black circles) of \cite{sprague2004}. See the text for details.}
\label{figfrap}
\end{figure}

In Figure~\ref{figfrap}, we plot the diffusive FRAP curve and the subdiffusive FRAP curve as functions of time. The circles in the top panel in Figure~\ref{figfrap} are experimental data points from Figure 5E in \cite{sprague2004}. Similarly, the circles in the bottom panel in Figure~\ref{figfrap} are data points from Figure 5F in \cite{sprague2004}. Figure~\ref{figfrap} shows that the subdiffusion-limited FRAP model described above can fit this experimental data of \cite{sprague2004}. \textcolor{black}{In particular, the subdiffusion-limited FRAP model and the normal diffusion FRAP model fit the experimental data of \cite{sprague2004} roughly equally well. Hence, this data alone cannot distinguish between the two models.} This figure follows Figures 1 and 2 in \cite{yuste2014} that showed that a different subdiffusive FRAP model can also fit this experimental data of \cite{sprague2004} {\black roughly equally well as the normal diffusion model}.

The parameters used in Figure~\ref{figfrap} are as follows. In Figure~\ref{figfrap}, the radius is $\rho=1.1\,\mu\text{m}$ in the top panel and $\rho=0.5\,\mu\text{m}$ in the bottom panel. For diffusive FRAP (blue solid curves), we take $\kon=400\,\text{sec}^{-1}$, $\koff=78.6\,\text{sec}^{-1}$, and $D=9.2\,\mu\text{m}^{2}\text{sec}^{-1}$ in both panels. For the subdiffusive FRAP (red dashed curves), we take the fractional operator to be the Riemann-Liouville operator $\DD=\D$ with $\alpha=0.75$, and set $\kon=750\,\text{sec}^{-\alpha}$, $\koff=17\,\text{sec}^{-\alpha}$, and $D=82\,\mu\text{m}^{2}\text{sec}^{-\alpha}$ in both panels. The parameters for the diffusive FRAP curves were used in Figure 5F in \cite{sprague2004} (slightly different parameters were used in Figure 5E in \cite{sprague2004}, but we use the same parameters in both panels).

\section{Fractional ODEs\label{sectionode}}

Our results hold in significant generality, essentially requiring only that the operator $\mathcal{A}$ commutes with temporal operators (see \eqref{commute1}-\eqref{commute3}). Indeed, the equations need not even involve the spatial variable $x$, and can instead be a system of fractional ODEs. Fractional ODEs have been used to model a variety of systems, including pharmacokinetics \cite{dokoumetzidis2009} and the spread of an infectious disease through a population \cite{angstmann2017}.

\subsection{Solution}

Consider the affine fractional ODEs, 
\begin{align}\label{nonsp}
\frac{\dd}{\dd t}\c(t)
=\DD(R\c(t)+\r),
\end{align}
where $\c(t)=({c_{i}}(t))_{i=1}^{n}\in\R^{n}$ is a time-dependent solution vector and $R\in\R^{n\times n}$ is a matrix and $\r\in\R^{n}$ is a vector. In this case, section~\ref{solution} yields the relations
\begin{align}
\c(t)
&=\E[\w(S(t))],\label{later88}\\
\widehat{\c}(\lambda)
&=\frac{\Psi(\lambda)}{\lambda}\widehat{\w}(\Psi(\lambda)),\nonumber
\end{align}
as in \eqref{soln}-\eqref{solnlp}, where $\w$ satisfies the ODE
\begin{align}\label{wode}
\frac{\dd}{\dd s}\w(s)
=R\w(s)+\r,
\end{align}
with $\w(0)=\c(0)\in\R^{n}$.

The solution $\w(s)$ to \eqref{wode} is of course
\begin{align*}
\w(s)
&=e^{Rs}\w(0)+\int_{0}^{s}e^{R(s-\sigma)}\r\,\dd \sigma\\
&=\sum_{k=0}^{\infty}\frac{R^{k}s^{k}}{k!}\w(0)+\sum_{k=0}^{\infty}\frac{R^{k}s^{k+1}}{(k+1)!}\r.
\end{align*}
Hence, \eqref{later88} yields the following explicit formula for the fractional solution in terms of the moments of $S(t)$,
\begin{align}\label{series}
\c(t)
&=\sum_{k=0}^{\infty}\frac{R^{k}\E[(S(t))^{k}]}{k!}\c(0)+\sum_{k=0}^{\infty}\frac{R^{k}\E[(S(t))^{k+1}]}{(k+1)!}\r.
\end{align}

In the case that the fractional operator is the Riemann-Liouville derivative, $\DD=\D$, we have that \cite{piryatinska2005}
\begin{align}\label{rlsk}
\E[(S(t))^{k}]
=\frac{t^{\alpha k}k!}{\Gamma(1+\alpha k)}.
\end{align}
Plugging \eqref{rlsk} into \eqref{series} yields a formula for $\c(t)$ that agrees with a recent result of Duan \cite{duan2018}.

\subsection{\textcolor{black}{Steady-states and stability}}

Equations~\eqref{ltl90}-\eqref{ltl91} in section~\ref{stability} above show that if $\w$ approaches a finite limit at large time, then $\c$ must also approach this same limit at large time. Furthermore, the results of section~\ref{stability} yield that if a nonlinear integer order ODE is linearly stable, then the corresponding nonlinear fractional order ODE is also linearly stable.

However, we caution that the stability of an integer order ODE cannot be inferred from the stability of the corresponding fractional ODE. Indeed, if the fractional operator is the Riemann-Liouville operator, $\DD=\D$, then it is known \cite{matignon1996, li2011} that the origin is asymptotically stable for the linear fractional ODE
\begin{align}\label{fcnice}
\frac{\dd}{\dd t}\c(t)
=\D R\c(t),
\end{align}
if and only if 
\begin{align}\label{argcond}
|\text{Arg}(\nu)|
>\frac{\alpha\pi}{2},
\end{align}
for every eigenvalue $\nu\in\mathbb{C}$ of $R\in\mathbb{R}^{n\times n}$, where $\text{Arg}(\nu)\in(-\pi,\pi]$ denotes the principal argument of $\nu$. Notice that \eqref{argcond} generalizes the classical result for integer order ODEs with $\alpha=1$. Hence, if $R$ satisfies \eqref{argcond} and 
\begin{align*}
|\text{Arg}(\nu)|
<\frac{\pi}{2},
\end{align*}
for some $\nu\in\mathbb{C}$, then the solution to \eqref{fcnice} vanishes but the solution to the corresponding integer equation diverges.

\subsection{Stochastic representation}

The stochastic representation of section~\ref{stochrep} above still holds in the non-spatial case of \eqref{nonsp} if $\r=0$ and $R$ is the forward operator for a continuous-time Markov chain (as in section~\ref{stochrep}). In this case, if $\{I(s)\}_{s\ge0}$ is a continuous-time Markov chain with forward operator $R$, then the probability distribution of $J(t):=I(S(t))$ satisfies \eqref{nonsp} with $\r=0$. We note that this connection between fractional order and integer order Markov chains was investigated in \cite{repin2000, jumarie2001, laskin2003, mainardi2004, mainardi2007, uchaikin2008, beghin2009, beghin2010, meerschaert2011} in the case that $\DD$ is the Riemann-Liouville derivative and $\{I(s)\}_{s\ge0}$ is a Poisson process.

\section{Discussion}

We have analyzed subdiffusion-limited mesoscopic equations describing a reaction-subdiffusion system in a general mathematical setting, under the assumption that the reactions are affine. We have shown that the solution to this fractional system is the expectation of a random time change of the corresponding integer order system. This result yielded (i) a simple algebraic relation between the fractional solution and the integer order solution in Laplace space, (ii) a sufficient condition for the {\black linear} stability of fractional equations with nonlinear reactions in terms of the {\black linear} stability of the corresponding integer order equations, and (iii) the exact microscopic description of single molecules corresponding to these mesoscopic equations and a numerical method for their stochastic simulation.

These results extend previous results for subdiffusive systems with no reactions. Barkai \cite{barkai2001} found the solution to a fractional Fokker-Planck equation in $\R$ in terms of the solution to the corresponding integer order Fokker-Planck equation in the case that the fractional operator is the Riemann-Liouville derivative. Magdziarz \cite{magdziarz2009} found the stochastic representation for such fractional Fokker-Planck equations in $\R$ when the fractional operator involves a general memory kernel. This was further generalized in \cite{magdziarz2016} by Magdziarz and Zorawik. In addition, fractional Fokker-Planck equations in $\R^{d}$ with general memory kernels were considered by Carnaffan and Kawai \cite{carnaffan2017}. Similar stochastic representations of solutions to fractional equations have been found in \cite{baeumer2001, meerschaert2004, chen2017, du2020}. {\black An additional related work is that of Yadav and Horsthemke \cite{yadav2006}, which derived a different class of reaction-subdiffusion equations and analyzed their linear stability.} 

{\black
An alternative to the subdiffusion-limited model considered in the present work is the activation-limited model \cite{nepomnyashchy2016}. In contrast to subdiffusion-limited reactions, activation-limited reaction rates are unaffected by subdiffusive processes. To illustrate in a simple example, consider a chemical which (i) subdiffuses in $\R^{d}$ with generalized diffusivity $K>0$ and (ii) switches between $n$ discrete states according to a constant reaction rate matrix $R\in\R^{n\times n}$. Let $\c(x,t)$ denote the vector of these $n$ chemical concentrations. In the subdiffusion-limited model, $\c$ evolves according to
\begin{align}\label{subl}
\frac{\partial}{\partial t}\c
=\DD(K\Delta \c+R\c),\quad x\in\R^{d},\,t>0,
\end{align}
where $\DD$ is as in section~\ref{solution}. In the activation-limited model, $\c$ evolves according to \cite{henry2006, sokolov2006, schmidt2007, langlands2008, lawley2020sr1}
\begin{align}\label{actl}
\frac{\partial}{\partial t}\c
=e^{Rt}\DD e^{-Rt}K\Delta \c+R\c,\quad x\in\R^{d},\,t>0,
\end{align}
where $e^{\pm Rt}$ denotes the matrix exponential.

As we showed in section~\ref{stochrep}, \eqref{subl} describes individual molecules whose discrete state dynamics depend on their subdiffusive behavior. Indeed, molecules following \eqref{subl} cannot switch state when they are in a subdiffusive ``pause,'' and this forces the random time between switches to have a Mittag-Leffler distribution (see \eqref{ml}). In contrast, it was recently proven in \cite{lawley2020sr1} that \eqref{actl} is a direct consequence of the independence of the discrete state and subdiffusive motion, and thus the molecules switch states at exponentially distributed times.

Differences between \eqref{subl} and \eqref{actl} can also be seen by examining their solutions. Assume an initial condition $\c(x,0)=u_{0}(x)\v$ for some function $u_{0}:\R^{d}\mapsto\R$ and some vector $\v\in\R^{n}$. If $u(x,s)\in\R$ satisfies the single-component normal diffusion equation,
\begin{align*}
\frac{\partial}{\partial s}u
&=K\Delta u,\quad x\in\R^{d},\,s>0\\
u
&=u_{0},\quad x\in\R^{d},\,s=0,
\end{align*}
then it follows from the analysis in section~\ref{solution} that the solution to \eqref{subl} is
\begin{align}\label{sublsoln}
\c(x,t)=\E[u(x,S(t))e^{RS(t)}]\v,
\end{align}
where $S(t)$ is as in section~\ref{solution}. In contrast, it follows from section~\ref{solution} above and the results of \cite{lawley2020sr1} that the solution to \eqref{actl} is
\begin{align}\label{actlsoln}
\c(x,t)=\E[u(x,S(t))]e^{Rt}\v.
\end{align}
Since the matrix exponential describes molecular reactions, it is evident that subdiffusion modifies the reactions in \eqref{sublsoln} (since the matrix exponential is subordinated by $S(t)$), whereas the reactions are unaffected by subdiffusion in \eqref{actlsoln}.
}

We used our results to explore how subdiffusion modifies several models in cell biology. The Laplace space relation we found between solutions to fractional and integer order equations allowed us to quickly convert results from diffusive models to subdiffusive models. Our results suggest that mechanisms for gradient formation which have been formulated for diffusive molecules extend to subdiffusive molecules. In addition, it is interesting that our subdiffusive FRAP model closely fits data from FRAP experiments \cite{sprague2004} {\black (the fit is roughly the same as the normal diffusion FRAP model)}. This parallels the work of Yuste et al.\ \cite{yuste2014}, which found similar results for a different subdiffusive FRAP model. 

{\black 
More generally, subordination methods (i.e.\ random time changes) similar to the one employed in the present work have been used to understand stochastic phenomena in many physical problems. For example, a variety of systems exhibit ``anomalous yet Brownian'' diffusion, which is defined by a linear mean-squared displacement with non-Gaussian increments \cite{wang2009}. Such systems have been modeled by diffusing diffusivity \cite{chubynsky2014}, which is equivalent to a certain subordination of diffusion \cite{chechkin2017}. In addition, subdiffusion and superdiffusion have been modeled by grey Brownian motion \cite{schneider1990, mura2008}, which can be represented in terms of a subordination of more classical processes \cite{silva2015}.
}

We also applied our results to fractional ODEs. Our work extends recent solution formulas for fractional ODEs \cite{duan2018} to more general fractional operators. In addition, our work complements and extends some previous work on fractional Poisson processes \cite{repin2000, jumarie2001, laskin2003, mainardi2004, mainardi2007, uchaikin2008, beghin2009, beghin2010, meerschaert2011}. 

While our results are formulated in significant  mathematical generality, we did assume that the reactions are affine functions, {\black which is perhaps the main limitation of our results. Some} previous studies considered models with nonlinear reactions (often mass action kinetics) \cite{yuste2004b, kosztolowicz2006, kosztolowicz2008, kosztolowicz2013, nec2010, nec2013, nepomnyashchy2013}. Hence, further investigating the relationship between fractional and integer order equations involving nonlinearities remains an important direction for future work.


\begin{acknowledgments}
The author gratefully acknowledges support from the National Science Foundation (DMS-1944574, DMS-1814832, and DMS-1148230).
\end{acknowledgments}

{\black
\section{Appendix}}

\textcolor{black}{In this Appendix, we give the proof of Theorem~\ref{laplacespace}. The proof} relies on the following lemma. We write $\textcolor{black}{\tau
=_{\dist}\exp(\lambda)}$ to denote that $\tau$ is exponentially distributed with rate $\lambda>0$, which means $\P(\tau>t)=e^{-\lambda t}$ for each $t>0$.

\begin{lemma}\label{softau}
If $\tau=_{\dist}\exp(\lambda)$ and is independent of $T$, then
\begin{align*}
S(\tau)
=_{\dist}\exp(\Psi(\lambda)).
\end{align*}
\end{lemma}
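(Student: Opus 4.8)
The plan is to compute the distribution of $S(\tau)$ directly by conditioning on $\tau$ and using the defining relationship between $S$ and $T$. The key observation is that the event $\{S(t) > s\}$ is (almost surely) the same as the event $\{T(s) < t\}$, which follows from the definition $S(t) = \inf\{s > 0 : T(s) > t\}$ together with the fact that paths of $T$ are strictly increasing and continuous-from-... — more precisely, from \eqref{ctsrv} the paths of $T$ are strictly increasing, so $T(s) < t$ iff $S(t) > s$ up to a null set. Hence for fixed $s \ge 0$,
\begin{align*}
\P(S(\tau) > s)
= \P(T(s) < \tau).
\end{align*}

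Now I would condition on $T(s)$ and use independence of $\tau$ and $T$. Since $\tau =_{\dist} \exp(\lambda)$, we have $\P(\tau > t \mid T(s) = t) = e^{-\lambda t}$, so
\begin{align*}
\P(T(s) < \tau)
= \E\big[\P(\tau > T(s) \mid T(s))\big]
= \E\big[e^{-\lambda T(s)}\big].
\end{align*}
(One should note that $\P(\tau = T(s)) = 0$ since $\tau$ is a continuous random variable independent of $T(s)$, so there is no ambiguity between strict and non-strict inequalities.) Then I would invoke the Laplace exponent identity \eqref{le}, namely $\E[e^{-\lambda T(s)}] = e^{-s\Psi(\lambda)}$, to conclude
\begin{align*}
\P(S(\tau) > s)
= e^{-s\Psi(\lambda)},
\end{align*}
which is exactly the statement that $S(\tau) =_{\dist} \exp(\Psi(\lambda))$.

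The main obstacle — really the only subtle point — is justifying the equivalence of events $\{S(t) > s\} = \{T(s) < t\}$ carefully enough, since the inverse subordinator $S$ is defined via an infimum and one must check the boundary cases ($s = 0$, and whether the inequalities are strict or not) don't cause trouble. This is handled by the continuity assumption \eqref{ctsrv}, which the excerpt already uses to note that paths of $T$ are strictly increasing and paths of $S$ are continuous; with strict monotonicity of $T$, for $t \ge 0$ and $s > 0$ one has $T(s) > t \iff S(t) < s$ and $T(s) \le t \iff S(t) \ge s$, and the remaining discrepancy between $\{T(s) < t\}$ and $\{T(s) \le t\}$ has probability zero by \eqref{ctsrv}. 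Everything else is a routine application of Fubini/Tonelli (for the conditioning step) and the definition of the Laplace exponent.
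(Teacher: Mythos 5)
Your proof is correct and follows essentially the same route as the paper: both rest on the event identity $\{S(\tau)>s\}=\{T(s)<\tau\}$ (up to null sets justified by \eqref{ctsrv}) and then reduce to $\E[e^{-\lambda T(s)}]=e^{-s\Psi(\lambda)}$ via \eqref{le}. The only cosmetic difference is that you condition on $T(s)$ and apply independence directly, whereas the paper conditions on $\tau$ and then uses an integration by parts to turn $\int_0^\infty F(t)\lambda e^{-\lambda t}\,\dd t$ into the Riemann--Stieltjes integral $\int_0^\infty e^{-\lambda t}\,\dd F(t)$; your ordering avoids that extra step.
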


\begin{proof}[Proof of Lemma~\ref{softau}]
Fix $s>0$. Using the definition of $S(t)$ in \eqref{S} and the independence of $T$ and $\tau$, conditioning on the value of $\tau$ gives
\begin{align}\label{first}
\begin{split}
\P(S(\tau)> s)
&=\int_{0}^{\infty}F(t)\lambda e^{-\lambda t}\,\dd t,
\end{split}
\end{align}
where $F(t):=\P(T(s)\le t)$ and we have used \eqref{ctsrv}. Integrating by parts in \eqref{first} yields
\begin{align}\label{rst}
\int_{0}^{\infty}F(t)\lambda e^{-\lambda t}\,\dd t
=
\int_{0}^{\infty} e^{-\lambda t}\dd F(t),
\end{align}
since $\lim_{t\to\infty}e^{-\lambda t}F(t)=0$ and $F(0)=\P(T(s)\le0)=0$ by \eqref{ctsrv} since $s>0$. Now, \eqref{le} implies that the Riemann-Stieltjes integral in the righthand side of \eqref{rst} is
\begin{align}\label{usingle}
\int_{0}^{\infty}e^{-\lambda t}\dd F(t)
=\E[e^{-\lambda T(s)}]
=e^{-s\Psi(\lambda)}.
\end{align}
Combining \eqref{first}-\eqref{usingle} completes the proof.
\end{proof}

The proof of Theorem~\ref{laplacespace} follows quickly from Lemma~\ref{softau}.

\begin{proof}[Proof of Theorem~\ref{laplacespace}]
The Laplace transform of $\c(t)$ is
\begin{align*}
\widehat{\c}(\lambda)
&=\int_{0}^{\infty}e^{-\lambda t}\c(t)\,\dd t
=\int_{0}^{\infty}e^{-\lambda t}\E[{\u}(S(t))]\,\dd t\\
&=\E\int_{0}^{\infty}e^{-\lambda t}{\u}(S(t))\,\dd t
=\frac{1}{\lambda}\E[{\u}(S(\tau))],
\end{align*}
where $\tau=_{\dist}\exp(\lambda)$ is independent of $S$ (the assumption \eqref{tonelli} and the theorems of Tonelli and Fubini ensure the validity of exchanging $\E$ with the integral). Therefore, if $\sigma
=_{\dist}\exp(\Psi(\lambda))$, then Lemma~\ref{softau} implies that
\begin{align}\label{relalp}
\begin{split}
\widehat{\c}(\lambda)
=\frac{1}{\lambda}\E[{\u}(\sigma)]
&=\frac{1}{\lambda}\int_{0}^{\infty}\Psi(\lambda)e^{-\Psi(\lambda)t}{\u}(t)\,\dd t\\
&=\frac{\Psi(\lambda)}{\lambda}\widehat{{\u}}(\Psi(\lambda)),
\end{split}
\end{align}
which completes the proof.
\end{proof}


\bibliography{library.bib}

\end{document}